\newtheorem{definition}{Definition}
\newtheorem{remark}{Remark}
\newtheorem{theorem}{Theorem}
\newtheorem{proposition}{Proposition}
\newtheorem{lemma}{Lemma} 
\newtheorem{corollary}{Corollary}
\newcommand{\supp}{\mathrm{supp}}
\newcommand{\cA}{\mathcal{A}}
\newcommand{\cB}{\mathcal{B}}
\newcommand{\cF}{\mathcal{F}}
\newcommand{\cN}{\mathcal{N}}
\newcommand{\cP}{\mathcal{P}}
\newcommand{\cX}{\mathcal{X}}
\newcommand{\cY}{\mathcal{Y}}
\newcommand{\EE}{\mathbb{E}}
\newcommand{\PP}{\mathbb{P}}
\newcommand{\RR}{\mathbb{R}}
\numberwithin{equation}{section}
\def\BibTeX{{\rm B\kern-.05em{\sc i\kern-.025em b}\kern-.08em
    T\kern-.1667em\lower.7ex\hbox{E}\kern-.125emX}}
\begin{document}

\title{ Information Contraction under 
\\ $(\varepsilon,\delta)$-Differentially Private Mechanisms \\ 
\thanks{TN acknowledges support from the IQUIST Postdoctoral Fellowship from
the Illinois Quantum Information Science and Technology Center at
the University of Illinois Urbana-Champaign.
IG is supported by the Ministry of Education, Singapore, through grant T2EP20124-0005. CH received funding by the Deutsche Forschungsgemeinschaft (DFG, German
Research Foundation) – 550206990. This work was supported, in part, by the Federal Ministry of Research, Technology and Space (BMFTR), Germany, under the QC service center QUICS (grant no. 13N17418).}
}

\author{\IEEEauthorblockN{Theshani Nuradha}
\IEEEauthorblockA{\textit{Dept. of Mathematics} \\
\textit{University of Illinois Urbana-Champaign}\\
Urbana, IL 61801, USA \\
nuradha@illinois.edu}
\and
\IEEEauthorblockN{Ian George}
\IEEEauthorblockA{\textit{Centre for Quantum Technologies}\\
\textit{National University of Singapore}\\
Singapore 117543, Singapore \\
qit.george@gmail.com}
\and
\IEEEauthorblockN{Christoph Hirche}
\IEEEauthorblockA{\textit{Institute for Information Processing } \\
\textit{Leibniz Universit\"at Hannover}\\
Germany\\
christoph.hirche@gmail.com}
}

\maketitle

\begin{abstract}
The distinguishability quantified by information measures after being processed by a private mechanism has been a useful tool in studying various statistical and operational tasks while ensuring privacy.  To this end, standard data-processing inequalities and strong data-processing inequalities (SDPI) are employed. Most of the previously known and even tight characterizations of contraction of information measures, including total variation distance, hockey-stick divergences, and $f$-divergences, are applicable for $(\varepsilon,0)$-local differential private (LDP) mechanisms. In this work, we derive both linear and non-linear strong data-processing inequalities for hockey-stick divergence and $f$-divergences that are valid for all $(\varepsilon,\delta)$-LDP mechanisms even when $\delta \neq 0$. Our results either generalize or improve the previously known bounds on contraction of these distinguishability measures. 

\end{abstract}

\begin{IEEEkeywords}
Differential privacy, $f$-divergences, hockey-stick divergence, strong data-processing.
\end{IEEEkeywords}

\section{Introduction}


Ensuring the privacy of data collected for various inference tasks is of importance with the rapid development of information processing techniques and technologies. To this end, statistical privacy frameworks provide provable privacy guarantees~\cite{DMNS06,DR14,KM14, CY16, mironov2017renyi, nuradha2022pufferfishJ}. Local differential privacy (LDP) is one such model (\Cref{def: LDP}), where individual data records are kept private while answering aggregate queries~\cite{erlingsson2014rappor}.  

Studying statistical inference under LDP constraints is vital for understanding the price that we have to pay to ensure privacy, especially in distributed settings. 
To this end, the contraction of divergences under LDP constraints is an important technical tool. For example, it was shown  in~\cite{kairouz2014extremal} that, for a mechanism $A$ satisfying  $(\varepsilon,0)$-LDP, for total variation distance $ \operatorname{TV}(P,Q) \coloneqq \frac{1}{2}\left\| P-Q\right\|_1 $ for two probability distributions $P$ and $Q$: 
\begin{equation}
   \operatorname{TV}(A(P),A(Q) ) \leq  \frac{e^\varepsilon -1}{e^\varepsilon +1} \operatorname{TV}(P,Q).
\end{equation}
Bounds for the contraction of other divergences, including hockey-stick, chi-square, Hellinger, Kullback–Leibler divergence, and more generally for $f$-divergences have been studied in~\cite{DJW13,duchi2018minimax,Contraction_local_new24,zamanlooy2023strong, zamanlooy2024mathrm}, mainly for the setting $\delta =0$. 

In this work, we study the information contraction of hockey-stick and $f$-divergences under $(\varepsilon,\delta)$-LDP without assuming $\delta = 0$. In Section \ref{sec:contraction-of-HS-div}, we establish linear (Proposition \ref{prop:contraction_coeff_upper_bound}) and non-linear SDPI (Theorem \ref{thm:non_linear_HS_div}) for hockey stick divergences. We show our results generalize those of \cite{zamanlooy2023strong,zamanlooy2024mathrm} as we recover those works' results by setting $\delta$ equal to zero. Moreover, we show the non-linear SDPI in Theorem \ref{thm:non_linear_HS_div} is tight, and we obtain contraction of hockey-stick divergence when several private mechanisms are sequentially composed in~\Cref{prop:sequ_compose}. In Section \ref{sec:contraction-of-f-div}, we use our linear SDPI bounds to obtain upper bounds on $f$-divergences (Proposition \ref{prp:f_div_Contraction}). We then highlight the generality and tightness of Proposition \ref{prp:f_div_Contraction} by showing it recovers a result of \cite{zamanlooy2023strong} in the case $\delta = 0$ and that for $\delta \neq 0$ it empirically is a better bound than a similar bound derived in concurrent work \cite{dasgupta2025quantum}. In total, these results generalize and improve upon previously known contraction bounds.







\section{Preliminaries and Notation}
We denote by $(\Omega,\cF,\PP)$ the underlying probability space on which all random variables (RVs) are defined. 
RVs are denoted by upper case letters, e.g. $X$, with $P_X$ representing the corresponding probability law. For $X\sim P_X$, we use $\supp(X)$ for the support. The joint law of $(X,Y)$ is denoted by $P_{XY}$, while $P_{Y|X}$ represents the (regular) conditional probability of $Y$ given $X$.

The hockey-stick divergence is defined as for $\gamma \geq 1$ and $P,Q \in \cP(\cX)$
\begin{equation}
  E_\gamma (P\| Q):=\sup_{\cB}\big|P(\cB) - \gamma Q(\cB)\big|, 
\end{equation}
where the supremum is over all measurable sets $\cB \subseteq \cX$. The total variation (TV) distance is a special case when $\gamma =1$. For finite $\cX$, we have that
\begin{equation}
    E_\gamma(P \Vert Q)= \sum_{x \in \cX} \max\{0, P(x)-\gamma Q(x) \}.
\end{equation}

The contraction coefficient of the hockey-stick divergence for $\gamma \geq 1$ and channel $P_{Y|X}$ is defined as
\begin{equation}
    \eta_{\gamma
    }(P_{Y|X}) \coloneqq \sup_{\substack{P_X, Q_X \textnormal{ s.t.} \\ E_\gamma(P_X \Vert Q_X) \neq 0}} \frac{  E_{\gamma}\!\left( P_{Y|X} \circ P_X \Vert P_{Y|X} \circ Q_X \right) }{E_\gamma(P_X \Vert Q_X)}.
\end{equation}
It was shown in~\cite[Theorem~2]{privacyAmplificationHS20} that 
\begin{equation} \label{eq:contraction_coeff_simpl_HS}
     \eta_{\gamma
    }(P_{Y|X}) = \sup_{x,x' \in \cX} E_{\gamma}\!\left( P_{Y|X} (\cdot| x) \Vert P_{Y|X} (\cdot |x' ) \right). 
\end{equation}

 The $\max$-divergence is defined as 
\begin{equation}
    D_{\max}( P \Vert Q) \coloneqq \log \sup_{ \substack{S \in \mathrm{Supp}(Y) \\  Y \sim P, Z \sim Q}} \frac{\PP[Y \in S]}{\PP[Z \in S]}.
    \end{equation}
  Smooth $\max$-divergence (aka approximate-max divergence~\cite{DR14}) is defined as 
\begin{equation}
    D_{\max}^\delta( P \Vert Q) \coloneqq \log \sup_{S \in \mathrm{Supp}(Y), \PP[Y \in S] \geq \delta} \frac{\PP[Y \in S] -\delta}{\PP[Z \in S]},
    \end{equation}
    where $Y$ and $Z$ are RVs distributed according to $Y \sim P$ and $Z \sim Q$.
    Also, note that $D_{\max}^\delta$ has the following duality with the hockey-stick divergence~\cite{liu2016e} (see also~\cite[Remark~4]{nuradha2023quantum}) ):
    \begin{equation}
        D_{\max}^\delta( P \Vert Q)= \log\inf\left\{ \lambda \geq 0: E_\lambda(P\Vert Q) \leq \delta \right\} .
    \end{equation}

The Kullback-Leibler (KL) divergence between $P, Q\in\cP(\cX)$ with $P\ll Q$ (absolutely continuous) is 
\begin{equation}
    D_{\operatorname{KL}}(P \Vert Q) \coloneqq \EE_P\left[\log \left(\frac{d P}{d Q}\right)\right],
\end{equation}
where $\frac{d P}{d Q}$ is the Radon-Nikodym derivative of $P$ with respect to (w.r.t.) $Q$ and $\EE_P$ represents the expectation w.r.t. $P$. For finite $\cX$, we have that $D_{\operatorname{KL}}(P \Vert Q) = \sum_{x \in \cX} P(x) \log\!\left( \frac{P(x)}{Q(x)}\right).$

Let $f: (0,\infty) \to \RR$ be a convex and twice differentiable function 
satisfying $f(1)=0$. Then, the $f$-divergence between $P, Q\in\cP(\cX)$ with $P\ll Q$ is defined 
\begin{align}
    D_f(P\|Q) \coloneqq \EE_q\left[ f\left( \frac{dP}{dQ}\right) \right].
\end{align}
For finite $\cX$, we have $ D_f(P\|Q) = \sum_{x \in \cX} Q(x)\, f\left( \frac{P(x)}{Q(x)} \right)$.
Also note that $D_{\operatorname{KL}}$ is equivalent to $D_{f}$ with $f=x \log( x)$.
Sason and Verd\'u gave an alternative representation for $f$-divergences~\cite[Proposition 3]{sason2016f}. For two discrete probability mass functions $P$ and $Q$ with equal support on a set $\mathcal{X}$, we have,
\begin{equation}\label{eq:f_divergence}
    D_f(P \Vert Q) \coloneqq \hspace{-1mm} \int_{1}^{\infty} \hspace{-2mm} f''(\gamma) E_\gamma(P \Vert Q) + \gamma^{-3} f''(\gamma^{-1}) E_\gamma(Q \Vert P) \ \mathrm{d} \gamma.
\end{equation}

Next, we define local differential privacy (LDP) and state equivalent formulations for the LDP framework.

\begin{definition}[Local Differential Privacy] \label{def: LDP}
Fix $\varepsilon \geq 0$ and $\delta \in [0,1]$. A~randomized mechanism ${A}: \cX \to \cY$  (a conditional probability distribution $P_{Y|X}$) is $(\varepsilon,\delta)$-local differentially private if 
\begin{equation}
\PP\big(A(x) \in \cB \big) \leq e^{\varepsilon} \hspace{1mm} \PP\big(A(x') \in \cB\big) + \delta,\label{eq:dp_def}
\end{equation}\\[-3.5mm]
for all $x,x' \in \cX$ and $\cB \subseteq \cY $ measurable. 
\end{definition}
From~\cite[Theorem~1]{asoodeh2021local} and~\eqref{eq:contraction_coeff_simpl_HS}, we have that a mechanism $A$ satisfying $(\varepsilon, \delta)$-LDP is equivalent to the following statements:
\begin{align}
& \sup_{P_X, Q_X \in \cP(X)} E_{e^\varepsilon}\!\left( A(P_X) \Vert A(Q_X) \right) \leq \delta \\ 
  &\iff   \eta_{e^\varepsilon}(A) \leq \delta \\
  &\iff 
\sup_{P_X, Q_X \in \cP(X)}   D_{\max}^\delta \!\left( A(P_X) \Vert A(Q_X) \right) \leq \varepsilon,
\end{align}
where the last implication follows from~\cite{DR14} along with the duality between smooth max-divergence and hockey-stick divergence.

\subsection{Previous results}
The following result is also relevant to our discussion.
 \begin{proposition}[Proposition 2,~\cite{zamanlooy2024mathrm}]
 \label{Prop:gamma-gamma-ev} Let $\cX$ be a finite set and $P_X, Q_X \in \cP(\cX)$.
     Set $\gamma\geq\gamma'\geq1$. If we have
     \begin{align}
         E_{\gamma'}(P_X \Vert Q_X) \leq (\gamma-\gamma')\, \min_{x \in \cX} Q(x), 
     \end{align}
     then $E_\gamma(\rho\|\sigma) = 0.$
 \end{proposition}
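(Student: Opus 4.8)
The plan is to work directly with the finite-support formula for the hockey-stick divergence, namely $E_\gamma(P_X \Vert Q_X) = \sum_{x \in \cX} \max\{0, P(x) - \gamma Q(x)\}$, and to show that the hypothesis forces $P(x) \le \gamma Q(x)$ for every symbol $x$, which makes every summand defining $E_\gamma(P_X \Vert Q_X)$ vanish. (Here I read the conclusion as $E_\gamma(P_X \Vert Q_X) = 0$; the $\rho,\sigma$ in the displayed statement appear to be a typo inherited from a quantum setting.)

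First I would fix an arbitrary $x \in \cX$ and isolate its contribution to $E_{\gamma'}(P_X \Vert Q_X)$. Since every term in the sum is nonnegative, the single summand $\max\{0, P(x) - \gamma' Q(x)\}$ is bounded above by the whole sum, that is, by $E_{\gamma'}(P_X \Vert Q_X)$. Invoking the hypothesis, this is in turn at most $(\gamma - \gamma') \min_{y \in \cX} Q(y)$.

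Next I would replace the minimum by the value at the chosen point using $\min_{y} Q(y) \le Q(x)$, obtaining $\max\{0, P(x) - \gamma' Q(x)\} \le (\gamma - \gamma') Q(x)$. In particular $P(x) - \gamma' Q(x) \le (\gamma - \gamma') Q(x)$, which holds trivially when the left-hand side is nonpositive (the right-hand side being nonnegative) and is exactly the bound just derived otherwise. Rearranging gives $P(x) \le \gamma Q(x)$.

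Since $x$ was arbitrary, every term $\max\{0, P(x) - \gamma Q(x)\}$ is zero, and summing over $\cX$ yields $E_\gamma(P_X \Vert Q_X) = 0$. The argument is essentially a termwise comparison, so I do not anticipate a genuine obstacle; the only step requiring care is passing from the single-coordinate bound to a per-coordinate inequality by exploiting $\min_{y} Q(y) \le Q(x)$, which is precisely what couples the available slack $(\gamma - \gamma')$ to the per-symbol mass $Q(x)$ and closes the argument.
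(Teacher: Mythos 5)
Your proof is correct. Note that the paper itself gives no proof of this proposition---it is imported verbatim from \cite{zamanlooy2024mathrm}---so there is no in-paper argument to compare against; your termwise derivation is a complete, elementary, and self-contained substitute. The chain you use is exactly right: each summand $\max\{0,P(x)-\gamma'Q(x)\}$ is dominated by the full nonnegative sum $E_{\gamma'}(P_X\Vert Q_X)$, the hypothesis bounds that by $(\gamma-\gamma')\min_{y}Q(y)\le(\gamma-\gamma')Q(x)$, and rearranging gives $P(x)\le\gamma Q(x)$ for every $x$, which is precisely the condition under which every term of the finite-alphabet formula for $E_\gamma(P_X\Vert Q_X)$ vanishes. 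You are also right that the $\rho,\sigma$ in the displayed conclusion is a typo for $P_X,Q_X$ inherited from the quantum companion setting. The only degenerate case worth recording is $\min_{x}Q(x)=0$: there the hypothesis forces $E_{\gamma'}(P_X\Vert Q_X)=0$, hence $P(x)\le\gamma'Q(x)\le\gamma Q(x)$ for all $x$, so the same comparison closes without a separate argument.
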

 At this point, we state two quick corollaries of this result.
 \begin{corollary} Let $\cX$ be a finite set and $P_X, Q_X \in \cP(\cX)$.
    For $\gamma\geq1$, we have
    \begin{align}
    D_{\max}(P_X \Vert Q_X) \leq \log\left( \gamma + \frac{E_{\gamma}(P_X \Vert Q_X)}{\min_{x \in \cX} Q(x)} \right).
\end{align}
\end{corollary}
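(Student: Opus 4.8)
The plan is to reduce the statement to a single, well-chosen application of \Cref{Prop:gamma-gamma-ev}. First I would record the characterization of $D_{\max}$ in terms of the hockey-stick divergence obtained by specializing the duality $D_{\max}^\delta(P\|Q)=\log\inf\{\lambda\geq 0:E_\lambda(P\|Q)\leq\delta\}$ to $\delta=0$. Since $E_\lambda(P\|Q)\geq 0$ always and $D_{\max}^0=D_{\max}$, this gives
\[
D_{\max}(P_X\|Q_X)=\log\inf\{\lambda\geq 1:E_\lambda(P_X\|Q_X)=0\}.
\]
Consequently, it suffices to exhibit a single value $\lambda^\star\geq 1$ with $E_{\lambda^\star}(P_X\|Q_X)=0$; the bound then follows because the infimum is at most $\lambda^\star$, so $D_{\max}(P_X\|Q_X)\leq\log\lambda^\star$.

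The natural candidate is exactly the argument of the logarithm appearing in the claim, namely $\lambda^\star\coloneqq\gamma+E_\gamma(P_X\|Q_X)/\min_{x\in\cX}Q(x)$. I would then invoke \Cref{Prop:gamma-gamma-ev} with its smaller parameter set to the given $\gamma$ (playing the role of $\gamma'$ there) and its larger parameter set to $\lambda^\star$ (playing the role of the proposition's $\gamma$). The hypothesis to verify is $E_\gamma(P_X\|Q_X)\leq(\lambda^\star-\gamma)\min_{x\in\cX}Q(x)$, and the entire point of the choice of $\lambda^\star$ is that $(\lambda^\star-\gamma)\min_{x\in\cX}Q(x)=E_\gamma(P_X\|Q_X)$, so the hypothesis holds with equality. \Cref{Prop:gamma-gamma-ev} then yields $E_{\lambda^\star}(P_X\|Q_X)=0$, which is precisely what the reduction above requires.

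What remains is only bookkeeping on admissibility of the parameters: since $E_\gamma(P_X\|Q_X)\geq 0$ and $\min_{x\in\cX}Q(x)\geq 0$ we have $\lambda^\star\geq\gamma\geq 1$, so the ordering demanded by \Cref{Prop:gamma-gamma-ev} holds, and the only degenerate situation is $\min_{x\in\cX}Q(x)=0$, where the right-hand side is $+\infty$ (when $E_\gamma>0$) and the inequality is vacuous. The main ``obstacle''\,---\,if it can be called one\,---\,is simply recognizing that the specific constant in the stated bound is exactly the threshold at which the hypothesis of \Cref{Prop:gamma-gamma-ev} becomes tight; once $\lambda^\star$ is identified this way, the proof is immediate and involves no genuine analytic difficulty.
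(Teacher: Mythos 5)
Your proof is correct and follows essentially the same route as the paper: both choose the threshold $\lambda^\star=\gamma+E_\gamma(P_X\Vert Q_X)/\min_{x}Q(x)$ so that the hypothesis of \Cref{Prop:gamma-gamma-ev} holds with equality, conclude $E_{\lambda^\star}(P_X\Vert Q_X)=0$, and finish using the fact that $e^{D_{\max}}$ is the first point at which the hockey-stick divergence vanishes (which you phrase via the $\delta=0$ case of the duality with $D_{\max}^\delta$). No substantive difference.
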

\begin{proof}
    Fix some $\gamma'\geq1$. Say $E_{\gamma'}(P_X \Vert Q_X) = x$. Then there exists a $\gamma\geq\gamma'$ such that $x=(\gamma-\gamma')\, \min_{x} Q(x)$. It follows that $E_\gamma(P_X \Vert Q_X)=0$ by~\cref{Prop:gamma-gamma-ev}. We also have $\gamma = \frac{x}{\min_{x} Q(x)}+\gamma'$. However, we also know that the first point at which the hockey-stick divergence becomes zero is given by the $D_{\max}$ divergence, hence the above gives an upper bound as stated in the claim. 
\end{proof}
The special case of $\gamma=1$ can be found in~\cite{sason2015reverse_arXiv}. 
As a direct consequence, one can also prove the following statement, which will be useful later.
\begin{corollary}\label{Cor:Dmax-by-smooth} Let $\cX$ be a finite set and $P_X, Q_X \in \cP(\cX)$.
    For $0\leq\delta\leq1$, we have
    \begin{align}
    D_{\max}(P_X \Vert Q_X) \leq \log\left( e^{D^\delta_{\max}(P_X \Vert Q_X)} + \frac{\delta}{\min_{x \in \cX} Q(x)} \right).
\end{align}
\end{corollary}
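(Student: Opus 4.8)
The plan is to reduce the statement to the previous corollary by choosing the free parameter $\gamma$ appropriately and invoking the duality between the smooth max-divergence and the hockey-stick divergence recorded earlier in the excerpt. Recall that the previous corollary gives, for every $\gamma \geq 1$,
\begin{equation}
    D_{\max}(P_X \Vert Q_X) \leq \log\left( \gamma + \frac{E_{\gamma}(P_X \Vert Q_X)}{\min_{x \in \cX} Q(x)} \right),
\end{equation}
while the duality reads $D_{\max}^\delta(P_X \Vert Q_X) = \log \inf\{\lambda \geq 0 : E_\lambda(P_X \Vert Q_X) \leq \delta\}$. The natural choice is therefore $\gamma = e^{D_{\max}^\delta(P_X \Vert Q_X)}$, which by the duality is exactly the infimal $\lambda$ for which $E_\lambda(P_X \Vert Q_X) \leq \delta$.

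The one step that requires care is verifying that this infimum is attained, i.e. that $E_\gamma(P_X \Vert Q_X) \leq \delta$ holds at $\gamma$ itself rather than only in the limit. For finite $\cX$ this follows from the explicit formula $E_\gamma(P_X \Vert Q_X) = \sum_{x \in \cX} \max\{0, P(x) - \gamma Q(x)\}$, which is continuous and non-increasing in $\gamma$; hence the sublevel set $\{\lambda : E_\lambda(P_X \Vert Q_X) \leq \delta\}$ is a closed half-line $[\gamma, \infty)$ and $E_\gamma(P_X \Vert Q_X) \leq \delta$ at its left endpoint $\gamma = e^{D_{\max}^\delta(P_X \Vert Q_X)}$.

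With this choice in hand I would substitute $\gamma = e^{D_{\max}^\delta(P_X \Vert Q_X)}$ into the previous corollary, and then use that the right-hand side is increasing in $E_\gamma(P_X \Vert Q_X)$ together with the bound $E_\gamma(P_X \Vert Q_X) \leq \delta$ and the monotonicity of $\log$ to replace $E_\gamma(P_X \Vert Q_X)$ by $\delta$. This yields exactly
\begin{equation}
    D_{\max}(P_X \Vert Q_X) \leq \log\left( e^{D_{\max}^\delta(P_X \Vert Q_X)} + \frac{\delta}{\min_{x \in \cX} Q(x)} \right),
\end{equation}
as claimed. The only genuine obstacle is the attainment of the infimum in the duality; everything else is a direct substitution, and I expect the continuity and monotonicity of $\gamma \mapsto E_\gamma(P_X \Vert Q_X)$ on the finite set $\cX$ to dispatch it cleanly, so the overall argument should be short.
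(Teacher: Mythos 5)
Your proposal is correct and is essentially the argument the paper intends: the paper's proof is the one-line remark that the claim ``follows from the duality between the smooth max-divergence and the hockey-stick divergence,'' i.e.\ apply the preceding corollary with $\gamma = e^{D_{\max}^\delta(P_X\Vert Q_X)}$ and bound $E_\gamma(P_X\Vert Q_X)\leq\delta$. Your extra care about attainment of the infimum (via continuity and monotonicity of $\gamma\mapsto E_\gamma$ on a finite alphabet) is a detail the paper leaves implicit, but it is the right justification.
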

\begin{proof}
    This follows from the duality between the smooth max-divergence and the hockey-stick divergence. 
\end{proof}
 
\section{Contraction of Hockey-Stick Divergences}\label{sec:contraction-of-HS-div}

In this section, we study the contraction of the hockey-stick divergence by deriving linear (bounds on contraction coefficient) and non-linear strong data processing inequalities. 
Note that all the contraction bounds mentioned in this section are valid for arbitrary $\cX$ and $\cY$ with $A:\cX \to \cY$ satisfying $(\varepsilon,\delta)$-LDP.

First by the data-processing inequality (DPI) of the hockey-stick divergence, we have that for $A: \cX \to \cY$ that satisfies $(\varepsilon,\delta)$-LDP, $\gamma' \geq 1$, and $P_X, Q_X \in \cP(\cX)$
\begin{equation} \label{eq:DPI_E_gamma}
     E_{\gamma'}\!\left( A(P_X) \Vert A(Q_X) \right) \leq E_{\gamma'}(P_X \Vert Q_X).
\end{equation}
Next, we show that we can strengthen DPI inequality to strong data-processing inequalities (SDPI) when the mechanism satisfies $(\varepsilon,\delta)$-LDP.
\begin{proposition} [Linear SDPI] \label{prop:contraction_coeff_upper_bound}
    Let $A: \cX \to \cY$ satisfy $(\varepsilon,\delta)$-LDP and $1 \leq \gamma' \leq e^\varepsilon$. Then, we have that for $P_X,Q_X \in \cP(\cX)$
    \begin{equation} \label{eq:contraction_HS_numerator}
        E_{\gamma'}\!\left( A(P_X) \Vert A(Q_X) \right) \leq \frac{(e^\varepsilon- \gamma') +\delta (\gamma'+1) }{e^\varepsilon+1} .
    \end{equation}
    Furthermore, we have that for $\gamma \geq \gamma' \geq 1$
    \begin{equation}\label{eq:bound-on-HS-contraction-coeff}
        \eta_{\gamma'}(A) \leq \frac{(e^\varepsilon- \gamma') + \delta (\gamma'+1) }{e^\varepsilon+1} .
    \end{equation}
For $\gamma' \geq \gamma$, we have that 
 \begin{equation}
        \eta_{\gamma'}(A) \leq \delta.
    \end{equation}
This results in for $\gamma' \geq 1$, 
\begin{equation}
        \eta_{\gamma'}(A) \leq \max \left\{  \frac{(e^\varepsilon- \gamma') + \delta (\gamma'+1) }{e^\varepsilon+1} , \delta \right\}.
    \end{equation}
\end{proposition}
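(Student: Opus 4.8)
The plan is to reduce the contraction coefficient to an optimization over a single pair of output distributions, and then treat the ranges $1\le\gamma'\le e^\varepsilon$ and $\gamma'\ge e^\varepsilon$ separately before recombining them into the stated maximum. By the characterization \eqref{eq:contraction_coeff_simpl_HS} we have $\eta_{\gamma'}(A)=\sup_{x,x'}E_{\gamma'}(A(\cdot|x)\|A(\cdot|x'))$, so every claim reduces to bounding $E_{\gamma'}(P\|Q)$ for a pair $P,Q$ of output distributions of $A$. The only facts about $P,Q$ I would use are that, since $A$ is $(\varepsilon,\delta)$-LDP, the equivalent formulation stated after \Cref{def: LDP} gives $\sup_{P_X,Q_X}E_{e^\varepsilon}(A(P_X)\|A(Q_X))\le\delta$; applying this to the pair and to its reverse yields $E_{e^\varepsilon}(P\|Q)\le\delta$ and $E_{e^\varepsilon}(Q\|P)\le\delta$. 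Throughout I write $\beta=e^\varepsilon$ and abbreviate the target bound by $h(\gamma')=\frac{(\beta-\gamma')+\delta(\gamma'+1)}{\beta+1}$.

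For the main range $1\le\gamma'\le\beta$ I would argue directly from the set definition of $E_{\gamma'}$. Fix a measurable $\cB$ and put $u=P(\cB)$, $v=Q(\cB)$. Applying $E_\beta(P\|Q)\le\delta$ to $\cB$ gives $u-\beta v\le\delta$, i.e. $v\ge(u-\delta)/\beta$, and applying $E_\beta(Q\|P)\le\delta$ to the complement $\cB^{c}$ gives, after rearranging $(1-v)-\beta(1-u)\le\delta$, the second lower bound $v\ge\beta u-\beta+1-\delta$. Hence $E_{\gamma'}(P\|Q)=\sup_{\cB}(u-\gamma' v)$ is at most the optimal value of the linear program maximizing $u-\gamma' v$ subject to these two lower bounds on $v$ together with $0\le u,v\le 1$. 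Since we subtract $\gamma' v$, the optimum sets $v$ equal to the larger of its two lower bounds, so the objective is piecewise linear in $u$: where the first bound is active it has slope $1-\gamma'/\beta\ge 0$ (using $\gamma'\le\beta$), and where the second is active it has slope $1-\gamma'\beta\le 0$ (using $\gamma',\beta\ge 1$). The peak therefore sits exactly at the crossover of the two bounds, $u^{\star}=\frac{\beta+\delta}{\beta+1}$ and $v^{\star}=\frac{1-\delta}{\beta+1}$, with value $u^{\star}-\gamma' v^{\star}=h(\gamma')$; the only degenerate corner, where a lower bound is negative and $v=0$, gives objective at most $\delta\le h(\gamma')$ and so does not compete. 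Applied to $P=A(P_X)$, $Q=A(Q_X)$ this is \eqref{eq:contraction_HS_numerator}, and applied through the characterization it gives \eqref{eq:bound-on-HS-contraction-coeff}. I expect this extremal step — correctly identifying which two constraints are active and locating the maximum at their intersection — to be the main obstacle; the matching two-point pair with first coordinates $P=\frac{\beta+\delta}{\beta+1}$, $Q=\frac{1-\delta}{\beta+1}$ shows the bound is exactly the LP value.

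It remains to handle $\gamma'\ge\beta$ and to recombine. For $\gamma'\ge\beta$, monotonicity of $\gamma\mapsto E_{\gamma}(P\|Q)$ — evident since increasing $\gamma$ only removes mass in $\sum_x[P(x)-\gamma Q(x)]_+$ — gives $E_{\gamma'}(P\|Q)\le E_{\beta}(P\|Q)\le\delta$, and taking the supremum over $x,x'$ yields $\eta_{\gamma'}(A)\le\delta$. Finally I would compare the two candidate bounds by the one-line identity $h(\gamma')-\delta=\frac{(\beta-\gamma')(1-\delta)}{\beta+1}$, which is nonnegative exactly when $\gamma'\le\beta$. Thus $h(\gamma')$ dominates $\delta$ on $1\le\gamma'\le\beta$ while $\delta$ dominates on $\gamma'\ge\beta$, so the two regime-wise estimates are precisely summarized by $\eta_{\gamma'}(A)\le\max\{h(\gamma'),\delta\}$, which is the claimed final inequality.
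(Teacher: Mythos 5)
Your proof is correct, and its overall skeleton matches the paper's: both reduce everything to the two constraints $E_{e^\varepsilon}(P\Vert Q)\le\delta$ and $E_{e^\varepsilon}(Q\Vert P)\le\delta$ on output pairs, handle $\gamma'\ge e^\varepsilon$ by monotonicity of $\gamma\mapsto E_\gamma$, and recombine via the identity $h(\gamma')-\delta=\frac{(1-\delta)(e^\varepsilon-\gamma')}{e^\varepsilon+1}$. The one genuine difference is in the central step for $1\le\gamma'\le e^\varepsilon$: the paper simply invokes Proposition~1 of \cite{zamanlooy2024mathrm}, namely $E_{\gamma'}(P\Vert Q)\le\frac{\gamma-\gamma'}{\gamma+1}+\frac{\gamma'+1}{\gamma+1}\max\{E_{\gamma}(P\Vert Q),E_{\gamma}(Q\Vert P)\}$ with $\gamma=e^\varepsilon$, whereas you reprove exactly this inequality (in the instance needed) from scratch via the two-constraint linear program in $(u,v)=(P(\cB),Q(\cB))$. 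Your LP analysis is sound: the two lower bounds on $v$ come from applying the $E_{e^\varepsilon}$ constraints to $\cB$ and $\cB^c$, the slope signs $1-\gamma'/e^\varepsilon\ge0$ and $1-\gamma'e^\varepsilon\le0$ correctly place the maximum at the crossover $(u^\star,v^\star)=\bigl(\tfrac{e^\varepsilon+\delta}{e^\varepsilon+1},\tfrac{1-\delta}{e^\varepsilon+1}\bigr)$ with value $h(\gamma')$, and you correctly dispose of the degenerate corner $v=0$, $u\le\delta$. What your route buys is a self-contained argument that also exhibits the extremal output pair saturating the bound; what the paper's route buys is brevity and the full generality of the cited interpolation inequality (arbitrary $\gamma\ge\gamma'$), which it reuses elsewhere (e.g.\ in the proof of \Cref{thm:non_linear_HS_div}).
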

\begin{proof}
 Since $A$ satisfies $(\varepsilon,\delta)$-LDP, we have that 
    \begin{equation}
        \max\left \{ E_{e^\varepsilon}(A(P_X) \Vert A(Q_X)),  E_{e^\varepsilon}(A(P_X) \Vert A(Q_X))\right\} \leq \delta.
    \end{equation}
    Then, by applying~\cite[Proposition~1]{zamanlooy2024mathrm} for $\gamma =e^\varepsilon$,
    \begin{equation}
        E_{\gamma'}(P\Vert Q) \leq \frac{e^\varepsilon- \gamma'}{e^\varepsilon+1} + \frac{\gamma'+1}{e^\varepsilon+1} \max\{ E_{e^\varepsilon}(P \Vert Q), E_{e^\varepsilon}(Q \Vert P)\},
    \end{equation}
     we have that 
         $E_{\gamma'}\!\left( A(P_X) \Vert A(Q_X) \right) \leq \frac{(e^\varepsilon- \gamma') +\delta (\gamma'+1) }{e^\varepsilon+1}$.
    For the contraction coefficient, recall that
       $\eta_{\gamma'}(A) = \sup_{x, x' \in \cX} E_{\gamma'}( A(\cdot| x) \Vert A(\cdot| x')).$ 
    This immediately concludes the proof by applying the first proposition statement for $e^\varepsilon \geq \gamma' \geq 1$. For $\gamma' \geq e^\varepsilon $, it follows as above together with the monotonicity of hockey-stick divergence.

    For the last inequality, consider the two inequalities that we proved for those two separate regimes. See that 
         $\frac{(e^\varepsilon - \gamma') + \delta (\gamma'+1) }{e^\varepsilon +1}  - \delta = \frac{(1-\delta) (e^\varepsilon  -\gamma')}{(e^\varepsilon  +1)}$. 
    From this, it is evident that the first bound is strictly larger than $\delta >0$ if and only if $e^\varepsilon  > \gamma'$. This leads to the combination of both the regimes, concluding the proof.    
\end{proof}
\begin{remark}
In \Cref{prop:contraction_coeff_upper_bound}, we study linear SDPI for hockey-stick divergence with $\gamma' \geq 1$, $\varepsilon \geq 0$ and $\delta \in [0,1]$. For $\gamma' \geq 1$ and $\delta =0$, it recovers~\cite[Theorem 1]{zamanlooy2023strong}. For $\delta \in [0,1]$ and $\gamma'=1$, a quantum variant of this result that reduces to the classical version here is established in~\cite[Theorem~5 and Remark~13]{nuradha2024contraction}, improving upon~\cite[Lemma~1]{asoodeh2021local}. For $\delta \in [0,1]$ and $\gamma' \geq 1$, an independent and concurrent work \cite{dasgupta2025quantum} establishes a similar result with a different proof technique. 
\end{remark}

Note that,~\Cref{prop:contraction_coeff_upper_bound} implies that for $\gamma' \geq 1$
\begin{multline}
      E_{\gamma'}\!\left( A(P_X) \Vert A(Q_X) \right)  \leq \\ \max \left\{  \frac{(e^\varepsilon- \gamma') + \delta (\gamma'+1) }{e^\varepsilon+1} , \delta \right\} E_{\gamma'}(P_X\Vert Q_X),
\end{multline}
which is the reason that we call it a linear SDPI whenever $\max \left\{  \frac{(e^\varepsilon- \gamma') + \delta (\gamma'+1) }{e^\varepsilon+1} , \delta \right\} <1$. Next, we obtain a non-linear SDPI inequality where we cannot completely isolate input and output distinguishability with a constant factor.

\begin{theorem} [Non-Linear SDPI]\label{thm:non_linear_HS_div}
    Let $A: \cX \to \cY$ satisfies $(\varepsilon,\delta)$-LDP. Then, we have that for $\gamma' \geq 1$ and $P_X, Q_X \in \cP(\cX)$
    \begin{multline}
         E_{\gamma'}\!\left( A(P_X) \Vert A(Q_X) \right) \leq \\  \max \Big \{\frac{(e^\varepsilon +2 \delta -1) E_{\gamma'}(P_X \Vert Q_X) -(\gamma'-1)(1-\delta)}{e^\varepsilon +1}, \\ \delta E_{\gamma'}(P_X \Vert Q_X) \Big\}.
    \end{multline}   
\end{theorem}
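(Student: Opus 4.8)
The plan is to bypass the generic route of combining the data-processing inequality with the order-$e^\varepsilon$ linear SDPI of \Cref{prop:contraction_coeff_upper_bound}: by itself this only reproduces the constant contraction bound and is genuinely too weak, since the claimed inequality can even force $E_{\gamma'}(A(P_X)\Vert A(Q_X))=0$ when $E_{\gamma'}(P_X\Vert Q_X)$ is small. Instead I would work directly with the variational form of the output divergence. Writing $P'=A(P_X)$, $Q'=A(Q_X)$ and letting $B=\{y:P'(y)>\gamma'Q'(y)\}$ be the optimal set, I have $E_{\gamma'}(P'\Vert Q')=P'(B)-\gamma'Q'(B)=\sum_x (P_X(x)-\gamma'Q_X(x))\,A(B\mid x)$. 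Setting $w_x:=P_X(x)-\gamma'Q_X(x)$ and $a_x:=A(B\mid x)\in[0,1]$, the point is that $\sum_{x:\,w_x>0}w_x=E_{\gamma'}(P_X\Vert Q_X)=:t$ while $\sum_x w_x=1-\gamma'$.

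The next step is to collapse the entire channel to two scalars. Let $M:=\sup_x a_x$ and $m:=\inf_x a_x$. Replacing $a_x$ by $M$ where $w_x>0$ and by $m$ where $w_x\le 0$ (legitimate because of the sign of $w_x$) gives the clean bound $E_{\gamma'}(P'\Vert Q')\le tM-(\gamma'-1+t)m$; I abbreviate the right-hand side as $c_0M-c_1m$ with $c_0=t\ge 0$ and $c_1=\gamma'-1+t\ge 0$. The $(\varepsilon,\delta)$-LDP constraint applied to $B$ gives $M\le e^\varepsilon m+\delta$, and applied to $B^c$ (that is, to $1-a_x$) gives $e^\varepsilon M-m\le e^\varepsilon-1+\delta$.

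It then remains to solve the two-variable linear program of maximizing $c_0M-c_1m$ over $0\le m\le M\le 1$ subject to these two constraints. I expect the solution to split according to the sign of $c_0e^\varepsilon-c_1$, equivalently according to whether $t$ is below or above $\frac{\gamma'-1}{e^\varepsilon-1}$. Eliminating $M$ via $M\le e^\varepsilon m+\delta$ gives $c_0M-c_1m\le(c_0e^\varepsilon-c_1)m+c_0\delta$, which is maximized at $m=0$ and yields $\delta t$ (the second entry of the $\max$) when $c_0e^\varepsilon\le c_1$; otherwise the maximum sits at the vertex where both LDP constraints are tight, namely $(M,m)=\big(\frac{e^\varepsilon+\delta}{e^\varepsilon+1},\frac{1-\delta}{e^\varepsilon+1}\big)$, which should reproduce exactly the first entry $\frac{(e^\varepsilon+2\delta-1)t-(\gamma'-1)(1-\delta)}{e^\varepsilon+1}$. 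Taking the larger of the two cases gives the statement; note this single argument covers all $\gamma'\ge1$ at once, with the $\delta t$ branch absorbing the $\gamma'\ge e^\varepsilon$ regime.

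I expect the main obstacle to be conceptual rather than computational: realizing that one must parametrize by the set-probabilities $A(B\mid x)$ and reduce to the two extreme values $(M,m)$, because the coupling between input and output divergences is destroyed if one only passes the scalar $E_{\gamma'}(P_X\Vert Q_X)$ through the DPI. A secondary technical point is the measure-theoretic treatment of the supremum and infimum defining $M$ and $m$ for general $\cX$, together with the endpoint sanity checks ($t=1$ should recover \Cref{prop:contraction_coeff_upper_bound}, and $\delta=0$ the known non-linear SDPI).
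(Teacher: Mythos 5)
Your proof is correct, but it takes a genuinely different route from the paper. The paper's proof is a two-line reduction: it invokes the shifted-order relation from Balle et al.\ (as restated in Eq.~(8) of the Zamanlooy--Asoodeh work), namely $E_{\gamma'}(A(P_X)\Vert A(Q_X))\le \eta_\beta(A)\,E_{\gamma'}(P_X\Vert Q_X)$ with $\beta = 1-(1-\gamma')/E_{\gamma'}(P_X\Vert Q_X)$, and then bounds $\eta_\beta(A)$ via the linear SDPI of \Cref{prop:contraction_coeff_upper_bound}; the algebra with this $\beta$ produces exactly the two branches of the max. Your opening remark slightly mischaracterizes that route --- the paper does not apply the linear SDPI at order $\gamma'$ (which would indeed be too weak), but at the data-dependent order $\beta$. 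Your argument instead rederives everything from scratch: the reduction of the channel to the two scalars $(M,m)=(\sup_x A(B\mid x),\inf_x A(B\mid x))$, the two LDP constraints $M\le e^\varepsilon m+\delta$ and $e^\varepsilon M-m\le e^\varepsilon-1+\delta$, and the resulting two-variable LP whose optimal vertices $(0,\delta)$ and $\bigl(\tfrac{1-\delta}{e^\varepsilon+1},\tfrac{e^\varepsilon+\delta}{e^\varepsilon+1}\bigr)$ give precisely the two entries of the max; I checked the vertex values and the sign condition $t\gtrless(\gamma'-1)/(e^\varepsilon-1)$, and they match the statement. What your approach buys is self-containedness and transparency: the optimal vertex is exactly the binary-symmetric mechanism with flip probability $(1-\delta)/(e^\varepsilon+1)$ appearing in the paper's achievability remark, so tightness falls out of the same computation. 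What it costs is that the clean sum-over-$x$ decomposition and the reduction to $(M,m)$ require the measure-theoretic care you flag for general $\cX$ and $\cY$ (Hahn decomposition of $P_X-\gamma'Q_X$, essential sup/inf, near-optimal sets $B$), whereas the paper inherits that generality from the cited lemma. One small point to tighten: in the regime $c_0e^\varepsilon>c_1$, eliminating $M$ via $M\le e^\varepsilon m+\delta$ alone does not suffice (a feasible point can have large $m$ with slack in that constraint), so you do need the full vertex argument you sketch, including checking that the objective decreases along the edge $e^\varepsilon M-m=e^\varepsilon-1+\delta$, which holds since $c_0/e^\varepsilon-c_1\le 0$ for all $t\in[0,1]$ and $\gamma'\ge 1$.
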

\begin{proof}
First, by~\cite{balle2019privacy}(see also~\cite[Eq.(8)]{zamanlooy2024mathrm} with $\gamma=e^\varepsilon$ and $t=E_{\gamma'}(P \Vert Q) $),  we have that for $1 \leq \beta= 1- (1-\gamma') /E_{\gamma'}(P \Vert Q)  $
\begin{align}
    & E_{\gamma'}\!\left( A(P_X) \Vert A(Q_X) \right) \leq \eta_{\beta}(\cN)\, E_{\gamma'}(P \Vert Q) \\
      & \leq  \max \! \left\{\frac{(\gamma- \beta) + \delta (\beta+1) }{\gamma+1}, \delta \right\} E_{\gamma'}(P \Vert Q) ,
\end{align}
where the last inequality follows from~\Cref{prop:contraction_coeff_upper_bound}. We arrive at the desired inequality by algebraic simplifications together with the substitution of $\beta= 1-  (1-\gamma')/E_{\gamma'}(P \Vert Q) $.    
\end{proof}
\begin{remark}
    \Cref{thm:non_linear_HS_div} recovers~\cite[Theorem~3]{zamanlooy2024mathrm} for $\delta =0$. In Fig.~\ref{fig:compare}, we compare the upper bounds provided by standard DPI, linear SDPI in~\Cref{prop:contraction_coeff_upper_bound}, and non-linear SDPI in~\Cref{thm:non_linear_HS_div} for a setting where $\delta >0$.
   \end{remark}
\begin{figure}[ht]
    \centering
\includegraphics[width=0.8\linewidth]{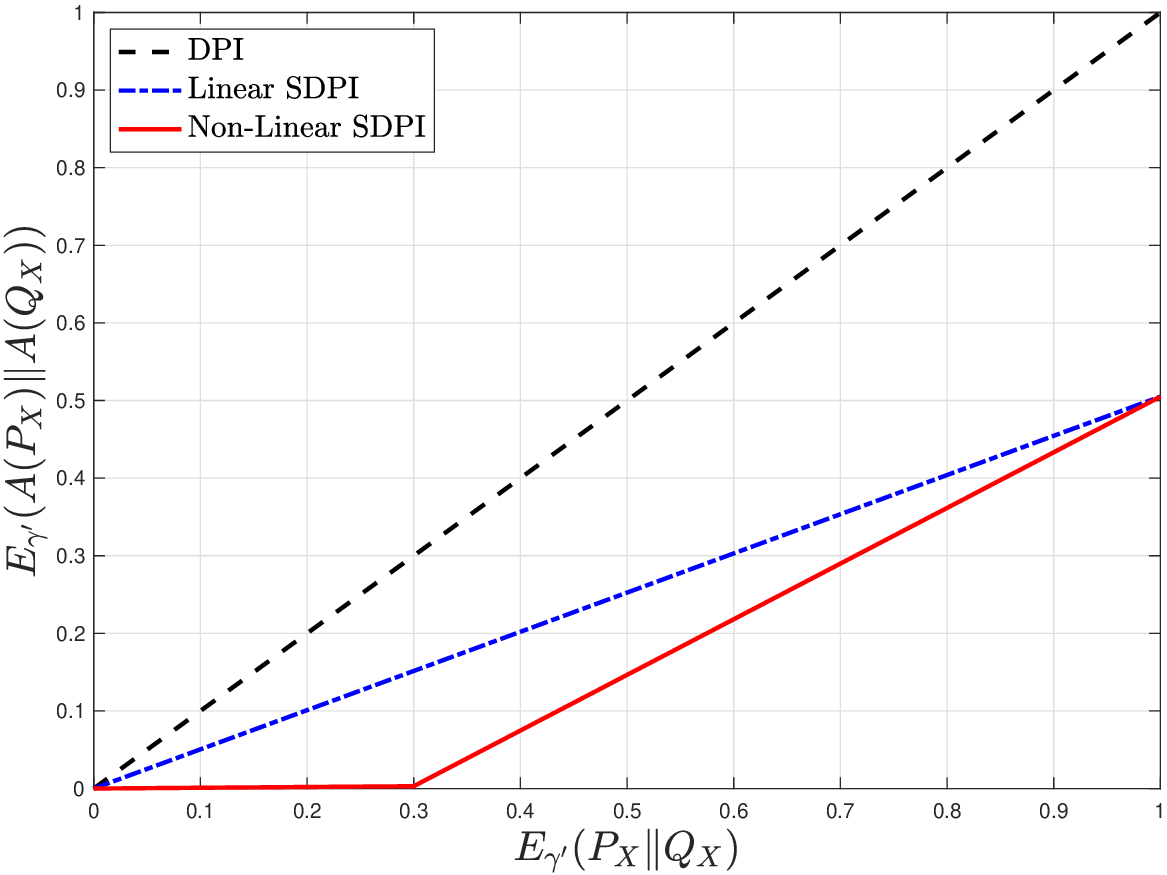} \caption{Comparison of information contraction inequalities: DPI refers to the standard data-processing inequality in~\eqref{eq:DPI_E_gamma}; Linear SDPI refers to~\Cref{prop:contraction_coeff_upper_bound}; and Non-Linear SDPI refers to~\Cref{thm:non_linear_HS_div}. In this example setting, we consider $\varepsilon= \ln(6), \gamma'=2.5 < e^\varepsilon, \delta=0.01$ for a mechanism $A$ satisfying $(\varepsilon,\delta)$-LDP. Each of these lines/curves shows the largest $E_{\gamma'}\!\left(A(P_X) \Vert A(Q_X)\right)$ value that can be reached for the input distinguishability $E_{\gamma'}(P_X \Vert Q_X) \in [0,1]$. }
    \label{fig:compare}
\end{figure}

\begin{remark}[Achievability of SDPI]
The upper bound in~\Cref{thm:non_linear_HS_div} is achieved by two Bernoulli distributions (i.e.; $P_X, Q_X$ with $\cX=\{0,1\}$) and mechanism $A$ being the binary symmetric channel denoted as $A_p$ with the flipping parameter $p=(1-\delta)/(e^\varepsilon +1)$ for $\delta=0$ and when $E_{\gamma'}(P_X\Vert Q_X) \geq (\gamma'-1)/(e^\varepsilon +1)$ for $\delta >0$. In particular for the above $P_X, Q_X$ and $A_p$, which satisfies $(\varepsilon,\delta)$-LDP, we have that
    \begin{align}
    &E_{\gamma'}\!\left( A(P_X) \Vert A(Q_X) \right)=  
     \nonumber \\ 
    & \max\! \left \{ E_{\gamma'}(P_X \Vert Q_X) \frac{(e^\varepsilon -1 +2 \delta)}{(e^\varepsilon +1)} +\frac{(1-\delta)}{(e^\varepsilon +1)} (1-\gamma') , 0 \right \}.
\end{align}
The above equality generalizes~\cite[Theorem~2]{zamanlooy2023strong} for $\gamma' \geq 1$, which holds for $\delta=0$.
\end{remark}



\subsection{$F_\gamma$ Curves}

Let us define $F_\gamma$ curves as follows \cite{zamanlooy2024mathrm}: For $t \in [0,1]$ and channel $P_{Y|X}$
\begin{multline} \label{eq:f_gamma_def}
    F_\gamma^{P_{Y|X}}(t) \coloneqq \\ \sup_{P_X,Q_X} \left\{ E_\gamma\!\left( P_{Y|X} \circ P_X \Vert P_{Y|X} \circ Q_X  \right): E_\gamma (P_X \Vert Q_X) \leq t \right\}.
\end{multline}
$F_\gamma$ satisfies the following properties: (i)  For $0 \leq t_1 \leq t_2 \leq 1$, we have that $F_\gamma^{P_{Y|X}}(t_1) \leq F_\gamma^{P_{Y|X}}(t_2)$; (ii) By the above fact and data-processing of $E_\gamma$ and for $P_X,Q_X$ such that $E_\gamma(P_X \Vert Q_X) \leq t$, we have $F_\gamma^{P_{Z|Y} \circ P_{Y|X}}(t) \leq F_\gamma^{P_{Z|Y}}(F_\gamma^{P_{Y|X}}(t)) $.

Let $A: \cX \to \cY$ satisfies $(\varepsilon,\delta)$-LDP. Then, we have that for $\gamma' \geq 1$ by applying~\Cref{thm:non_linear_HS_div} together with the definition of $F_\gamma$:
\begin{equation*}
    F_{\gamma'}^A(t) \leq  \max \left\{\frac{(e^\varepsilon+2 \delta -1) t -(\gamma'-1)(1-\delta)}{e^\varepsilon +1}, \delta t\right\}.
\end{equation*}
The above result recovers \cite{zamanlooy2024mathrm} when $\delta=0$.

Next, we study the contraction of the composition of the several private channels over multiple rounds $n$ with $\cA_i: \cX \to \cX$, which provides a useful characterization of the privacy of sequential composition.
\begin{proposition}\label{prop:sequ_compose}
     Let $1 < \gamma' <e^\varepsilon$, $\delta \in (0,1)$ and $A_i: \cX \to \cX$ satisfying $(\varepsilon,\delta)$-LDP for $i\in\{1,\dots,n\}$. Denote $M \coloneqq A_n \circ \cdots \circ A_1 $. We have for $t \in[0,1]$ that 
           $F_{\gamma'}^M(t) \leq G_n(t)$,
    where  \begin{equation}
 G_n(t) \coloneqq
    \begin{cases}
      \Phi_n(t), & 1\leq n\le k_*(t),\\
      \delta^{n-k_*(t)}\,\Phi_{k_*(t)}(t), & n > k_*(t),
    \end{cases}
    \label{eq:Gn-piecewise}
  \end{equation}
  with
  \begin{equation}
      k_*(t) \coloneqq \min\{k\in\mathbb N:\ \Phi_k(t)\leq t_*\}=    \left\lceil\frac{\ln\!\left(\frac{t_*(1-a)+b}{\,t(1-a)+b\,}\right)}{\ln a}\right\rceil_+
  \end{equation}
and $  \Phi_k(t)\coloneqq a^{k}\!\left(t+\frac{b}{1-a}\right) - \frac{b}{1-a}$ for $k\in \mathbb{N}$, $t_* \coloneqq \frac{\gamma'-1}{e^\varepsilon-1} $, $a \coloneqq\ \frac{e^\varepsilon+2\delta-1}{\gamma+1}$, and $
  b\coloneqq \frac{(\gamma'-1)(1-\delta)}{e^\varepsilon+1}.$
  \end{proposition}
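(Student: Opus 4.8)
The plan is to derive the bound by iterating the single-step non-linear SDPI through the composition structure of the $F_\gamma$ curves. Writing $h(s) \coloneqq \max\{\Phi_1(s),\,\delta s\} = \max\{as-b,\,\delta s\}$ for the right-hand side of \Cref{thm:non_linear_HS_div} (whose affine branch has slope $a$ and intercept $-b$, matching the constants defined in the proposition), that theorem, together with the defining remark for $F_\gamma^A$, gives $F_{\gamma'}^{A_i}(s) \le h(s)$ for every $i$ and every $s \in [0,1]$. I would first record two elementary facts about $h$: it is nondecreasing (a maximum of two increasing affine functions), and it maps $[0,1]$ into $[0,1]$ (here $a \le 1$ because $\delta \le 1$, and $h(s) \ge \delta s \ge 0$). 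Combined with the monotonicity property (i) and the composition property (ii) of $F_\gamma$ curves, a straightforward induction on $n$ then yields $F_{\gamma'}^M(t) \le h^{(n)}(t)$, the $n$-fold iterate of $h$ evaluated at $t$: at each stage one peels off the last channel via (ii), bounds the inner $F$-curve by the inductive hypothesis, and uses monotonicity of the outer $F$-curve together with $F^{A_n} \le h$.

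The remaining---and main---task is to show $h^{(n)}(t) = G_n(t)$, which is a purely dynamical statement about iterating the piecewise-linear map $h$. I would begin by locating the breakpoint: solving $as - b = \delta s$ and using $a - \delta = \frac{(e^\varepsilon-1)(1-\delta)}{e^\varepsilon+1}$ gives the crossover at $s = b/(a-\delta) = t_*$, so that $h(s) = \Phi_1(s) = as-b$ for $s \ge t_*$ and $h(s) = \delta s$ for $s \le t_*$. Next I would note that $\Phi_k$ is precisely the $k$-fold composition of the affine contraction $\Phi_1$ (its fixed point $-b/(1-a)$ is negative and $0 < a < 1$), so on $[0,1]$ the map $\Phi_1$ is strictly decreasing, $\Phi_1(s) < s$; likewise $s \mapsto \delta s$ is a strict contraction that maps $\{s \le t_*\}$ into itself since $t_* > 0$.

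These monotonicity facts force a single-switch trajectory, which is the crux of the argument. Starting from $t$, as long as the running value stays $\ge t_*$ the map acts as $\Phi_1$ and the value strictly decreases, producing $\Phi_1(t), \Phi_2(t), \dots$; by definition $k_*(t)$ is the first index at which the value drops to or below $t_*$, so the first $k_*(t)$ iterations yield exactly $\Phi_{k_*(t)}(t)$. From that point on the value is $\le t_*$ and every further iteration multiplies by $\delta$ (and stays $\le t_*$), giving the factor $\delta^{n-k_*(t)}$. This reproduces the two branches of $G_n$ in \eqref{eq:Gn-piecewise}. Finally, the closed form for $k_*(t)$ follows by rearranging $\Phi_k(t) \le t_*$ into $a^k \le \frac{t_*(1-a)+b}{t(1-a)+b}$ and taking logarithms, remembering that $\ln a < 0$ reverses the inequality (the $\lceil\cdot\rceil_+$ absorbing the case $t \le t_*$, where $k_*(t) = 0$ and only the $\delta$-branch occurs). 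The only subtlety requiring care is the clean ``no interleaving'' of the two regimes and the boundary value $s = t_*$, where the two branches coincide so that either choice is consistent.
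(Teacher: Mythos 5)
Your proposal is correct, and it supplies in full the argument that the paper itself omits: the paper's ``proof'' of \Cref{prop:sequ_compose} is a one-line citation to the quantum version in the companion paper, and the structure of the statement (with $\Phi_k$ the $k$-fold iterate of the affine branch and $k_*$ the switching time) makes clear that the intended argument is exactly your iteration of the single-step bound $h(s)=\max\{as-b,\delta s\}$ through the composition property of the $F_\gamma$ curves, followed by the single-switch analysis of the orbit. The only blemish is the phrase ``the map $\Phi_1$ is strictly decreasing'': $\Phi_1(s)=as-b$ with $a>0$ is an increasing function of $s$; what decreases (as your very next clause $\Phi_1(s)<s$ makes clear) is the orbit, since $0<a<1$ and $b>0$. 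With that wording fixed, the proof is complete and matches the intended route.
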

\begin{proof}
    The proof follows the quantum version of the result in the companion paper~\cite[Proposition~6]{NGH2025nonHS} with $\gamma=e^\varepsilon$. 
\end{proof}

\section{Contraction of $f$-Divergences}\label{sec:contraction-of-f-div}
In this section, we study the contraction of $f$-divergences when processed by an $(\varepsilon, \delta)$-LDP mechanisms, which are valid even when $\delta >0$. The contraction bounds for $f$-divergences in this section are valid for finite $\cY$ when $\cA: \cX \to \cY$. 
First, we need the following lemma, which is a slight extension of previous results in~\cite{zamanlooy2023strong,hirche2024quantumDivergences}.
\begin{lemma}\label{Lem:HS-Bound-HS}
    Let $1\leq\gamma_1\leq\gamma\leq\gamma_2$, then
    \begin{align}
        E_\gamma(P_X\|Q_X) \leq &\frac{\gamma-\gamma_2}{\gamma_1-\gamma_2} E_{\gamma_1}(P_X\|Q_X) \nonumber\\
        &+ \frac{\gamma_1-\gamma}{\gamma_1-\gamma_2}E_{\gamma_2}(P_X\|Q_X). 
    \end{align}
\end{lemma}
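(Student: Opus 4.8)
The plan is to exploit the fact that, for fixed $P_X,Q_X$, the map $\gamma\mapsto E_\gamma(P_X\|Q_X)$ is convex on $[1,\infty)$, and then to read off the claimed inequality as the standard secant (chord) bound for a convex function evaluated at an interior point $\gamma\in[\gamma_1,\gamma_2]$. This is exactly the mechanism by which the cited results interpolate hockey-stick divergences, and the lemma is just the two-point instance of it.

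First I would establish convexity in $\gamma$. Using the variational representation $E_\gamma(P_X\|Q_X)=\sup_{\mathcal B}\big(P_X(\mathcal B)-\gamma Q_X(\mathcal B)\big)$ (valid for $\gamma\geq 1$, where the supremum over measurable $\mathcal B$ is attained with value $\geq 0$ since $\mathcal B=\emptyset$ is admissible), the quantity is a pointwise supremum of functions that are affine — indeed linear and nonincreasing — in $\gamma$, and a supremum of affine functions is convex. Equivalently, on a finite alphabet one can invoke $E_\gamma(P_X\|Q_X)=\sum_x\max\{0,P_X(x)-\gamma Q_X(x)\}$ and note that each summand $\max\{0,P_X(x)-\gamma Q_X(x)\}$ is convex in $\gamma$, being the maximum of an affine function and the constant $0$; hence their sum is convex.

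Next I would write the interior point as a convex combination of the endpoints: set $\lambda\coloneqq\frac{\gamma_2-\gamma}{\gamma_2-\gamma_1}$, so that $\gamma=\lambda\gamma_1+(1-\lambda)\gamma_2$. Because $\gamma_1\leq\gamma\leq\gamma_2$ we have $\lambda\in[0,1]$, and convexity gives $E_\gamma(P_X\|Q_X)\leq\lambda E_{\gamma_1}(P_X\|Q_X)+(1-\lambda)E_{\gamma_2}(P_X\|Q_X)$. It then remains to check the bookkeeping $\lambda=\frac{\gamma-\gamma_2}{\gamma_1-\gamma_2}$ and $1-\lambda=\frac{\gamma_1-\gamma}{\gamma_1-\gamma_2}$, which follows by multiplying numerator and denominator by $-1$; this matches the stated coefficients exactly.

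I do not expect a genuine obstacle here: the content is entirely the convexity of $\gamma\mapsto E_\gamma$ together with linear interpolation. The only points requiring mild care are (i) choosing the representation of $E_\gamma$ that makes convexity transparent, and (ii) tracking the sign of the denominator $\gamma_1-\gamma_2\leq 0$ so that the two coefficients are correctly identified as nonnegative weights summing to one. Mislabeling a sign is the likeliest source of a slip, but it is resolved by the substitution above.
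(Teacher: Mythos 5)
Your proof is correct and follows exactly the same route as the paper: the paper's own argument is the one-line observation that $\gamma\mapsto E_\gamma(P_X\|Q_X)$ is convex (as a supremum/sum of functions affine in $\gamma$), so the value at $\gamma\in[\gamma_1,\gamma_2]$ lies below the chord through the endpoints. You have simply spelled out the convexity justification and the coefficient bookkeeping that the paper leaves implicit.
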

\begin{proof}
    The proof follows from the convexity of the function $\gamma\to E_\gamma$. As a result for any $1\leq\gamma_1\leq\gamma_2$ it is upper bounded by the straight line connecting the corresponding points of the function. That line can be checked to be the claimed result. 
\end{proof}
We can now give a bound on general $f$-divergences. 
\begin{proposition} \label{prp:f_div_Contraction}
    Let $A:\cX \to \cY$ be an $(\epsilon,\delta)$-LDP mechanism (represented as $P_{Y|X}$) such that $\lambda := \inf_{y \in \cY} A(P_X)(y) >0$, then
    \begin{align}
       & D_f(A(P_X)\|A(Q_X)) \leq  \nonumber\\
    &\frac{f(e^\epsilon)+e^\epsilon f(e^{-\epsilon})}{e^\epsilon-1}\frac{e^\epsilon-1+2\delta}{e^\epsilon+1}\tau - \frac{f(e^\epsilon)+f(e^{-\epsilon})}{e^\epsilon-1}\delta \nonumber\\
        &+ \lambda \Big( f\left(e^\epsilon+\frac{\delta}{\lambda}\right) - f\left(e^\epsilon\right) + \left(e^\epsilon+\frac{\delta}{\lambda}\right)f\left(\left(e^\epsilon+\frac{\delta}{\lambda}\right)^{-1}\right) \nonumber\\
        & \quad \quad \quad -\left(e^\epsilon+\frac{\delta}{\lambda}\right)f\left(e^{-\epsilon}\right) \Big),
    \end{align}
    where $\tau=E_1(P_X\|Q_X)$ is the TV distance.  
\end{proposition}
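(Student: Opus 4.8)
The plan is to feed the output distributions into the Sason--Verd\'u integral representation~\eqref{eq:f_divergence},
\begin{equation*}
D_f(A(P_X)\|A(Q_X)) = \int_1^\infty f''(\gamma)\,E_\gamma(A(P_X)\|A(Q_X)) + \gamma^{-3}f''(\gamma^{-1})\,E_\gamma(A(Q_X)\|A(P_X))\,\mathrm d\gamma,
\end{equation*}
and then to bound each of the two output hockey-stick curves $\gamma\mapsto E_\gamma(A(P_X)\|A(Q_X))$ and $\gamma\mapsto E_\gamma(A(Q_X)\|A(P_X))$ pointwise by an explicit piecewise-linear function of $\gamma$. Because $f$ is convex we have $f''\geq 0$, so any pointwise upper bound on the two curves integrates to an upper bound on $D_f$; the whole problem thus reduces to finding good piecewise-linear envelopes and then integrating them against $f''$.

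To build those envelopes I would split the integration range at $\gamma=e^\varepsilon$ and exploit the convexity of $\gamma\mapsto E_\gamma$ through \Cref{Lem:HS-Bound-HS}, noting that the interpolation coefficients are nonnegative so that plugging in upper bounds at the endpoints still yields a valid upper bound. On $[1,e^\varepsilon]$ I interpolate linearly between the two endpoints of each curve: at $\gamma=1$ the value is the output total variation distance, which \Cref{thm:non_linear_HS_div} at $\gamma'=1$ bounds by $\tfrac{e^\varepsilon-1+2\delta}{e^\varepsilon+1}\tau$ (the first branch of the maximum dominates since $\delta\le 1$), while at $\gamma=e^\varepsilon$ the $(\varepsilon,\delta)$-LDP guarantee gives $E_{e^\varepsilon}\le\delta$. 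On the tail $[e^\varepsilon,\Gamma]$ with $\Gamma:=e^\varepsilon+\delta/\lambda$, I interpolate between the value $\delta$ at $\gamma=e^\varepsilon$ and the value $0$ at $\gamma=\Gamma$; the vanishing at $\Gamma$ is the key structural input, supplied by \Cref{Cor:Dmax-by-smooth}, which together with $D^\delta_{\max}(A(P_X)\|A(Q_X))\le\varepsilon$ forces $E_\gamma=0$ once $\gamma\ge e^\varepsilon+\delta/\lambda$. The chord produced on this interval is exactly $\lambda(\Gamma-\gamma)$. The same two envelopes apply to the backward curve, since both total variation distance and the LDP bound are symmetric in the two arguments.

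With these bounds substituted, the estimate splits into four integrals: the pieces on $[1,e^\varepsilon]$ and on $[e^\varepsilon,\Gamma]$, for the forward and the backward term. Each is evaluated by a single integration by parts, using $f(1)=0$; for the backward term I would first apply the substitution $s=\gamma^{-1}$, under which $\gamma^{-3}f''(\gamma^{-1})\,\mathrm d\gamma=-s\,f''(s)\,\mathrm ds$ and the weights $e^\varepsilon-\gamma$, $\gamma-1$, $\Gamma-\gamma$ become the affine functions $e^\varepsilon s-1$, $1-s$, $\Gamma s-1$ on the reflected interval.

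I expect the main obstacle to be purely the bookkeeping of the cancellations. Each integration by parts throws off boundary terms proportional to $f'(1)$, $f'(e^\varepsilon)$ and $f'(e^{-\varepsilon})$, none of which appear in the claimed bound; the crux is to verify that the $f'(1)$ contributions cancel between the two $\gamma=1$ endpoints and that the $f'(e^\varepsilon)$ and $f'(e^{-\varepsilon})$ contributions cancel across the split point $\gamma=e^\varepsilon$ (here the identities $\Gamma-e^\varepsilon=\delta/\lambda$ and $\tfrac{1-e^{-\varepsilon}}{e^\varepsilon-1}=e^{-\varepsilon}$ do the work). After these cancellations the surviving terms regroup into the three blocks of the statement: the $\tau$-linear term, the $-\delta$ term, and the $\lambda$-correction coming entirely from the two tail integrals. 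Two smaller points to handle are that \eqref{eq:f_divergence} requires $A(P_X)$ and $A(Q_X)$ to have common support, guaranteed on finite $\cY$ by $\lambda>0$, and that $\lambda$ must lower-bound the minimum mass of both output distributions, since the forward curve's vanishing point is controlled by $\min_y A(Q_X)(y)$ rather than by $\lambda=\inf_y A(P_X)(y)$ alone.
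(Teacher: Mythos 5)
Your proof is correct and takes essentially the same route as the paper's: the integral representation \eqref{eq:f_divergence}, a split of the integral at $e^{\varepsilon}$, the convexity chords of \Cref{Lem:HS-Bound-HS} anchored at the values $\tfrac{e^{\varepsilon}-1+2\delta}{e^{\varepsilon}+1}\tau$, $\delta$, and $0$, and \Cref{Cor:Dmax-by-smooth} to place the vanishing point of the hockey-stick curve at $e^{\varepsilon}+\delta/\lambda$. Your closing caveat---that $\lambda$ must also control $\min_{y} A(Q_X)(y)$, since that is what governs where the forward curve $E_\gamma(A(P_X)\|A(Q_X))$ vanishes---is a genuine and correct observation about a point the paper itself only addresses in the remark following the proposition (by taking $\lambda=\inf_{x,y}P_{Y|X}(y|x)$).
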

\begin{proof}
    We start with the integral representation in Eq.~\eqref{eq:f_divergence}. The start of the proof is similar to that of~\cite[Proposition 5.2]{hirche2024quantumDivergences}. Crucial is that the hockey-stick divergence $E_\gamma(P_X\|Q_X)$ is zero for all $\gamma\geq\exp(D_{\max}(P_X\|Q_X))$. This allows us to limit the range of the integral. Then, we split the integral into two parts, $0\leq\gamma\leq e^{\epsilon}$ and $e^{\epsilon}\leq\gamma\leq e^{D_{\max}(A(P_X)\|A(Q_X))}$. Then, using~\Cref{Lem:HS-Bound-HS} to bound the hockey-stick divergence by values that we assume are known 
    leaves us with explicitly calculating the remaining integrals, which can be done directly.
    The resulting  $E_1(A(P_X)\|A(Q_X))$ we bound further with~\cref{eq:bound-on-HS-contraction-coeff} choosing $\gamma'=1$, which leads to the first term in the claim. The second term is immediate. For the unwieldy third term, we need to also bound the max-relative entropy. Note that the bound is monotonically non-decreasing in the upper boundary of the integral. For that we use~\Cref{Cor:Dmax-by-smooth}, which gives, 
    $e^{D_{\max}(P_X\|Q_X)} 
    \leq e^{D^\delta_{\max}(P_X\|Q_X)} + \frac{\delta}{\min_{x} Q(x)}  
    \leq e^{\epsilon} + \frac{\delta}{\min_{x} Q(x)}.$
Hence, setting $\lambda := \inf_{y \in \cY} A(P_X)(y)$ and bringing all the above steps together completes the proof.
\end{proof}
Note that a similar bound holds in the case where we want $\lambda$ to be independent of the input distributions. By choosing $\lambda=\inf_{x\in \cX, y \in \cY} P_{Y|X}(y|x) $, such that the previous relation holds for all possible output states of the channel.

This is a very general result. To get a better idea, we will compare some special cases. First note that for $\delta=0$, we recover
    \begin{align*}
        D_f(A(P_X)\|A(Q_X)) \leq &\frac{f(e^\epsilon)+e^\epsilon f(e^{-\epsilon})}{e^\epsilon+1} \ E_1(P_X\|Q_X),
    \end{align*}
which was previously shown in~\cite[Theorem~5]{zamanlooy2023strong}.

\begin{figure}[!t]
\begin{subfigure}{.45\textwidth}
  \centering  \includegraphics[width=\linewidth]{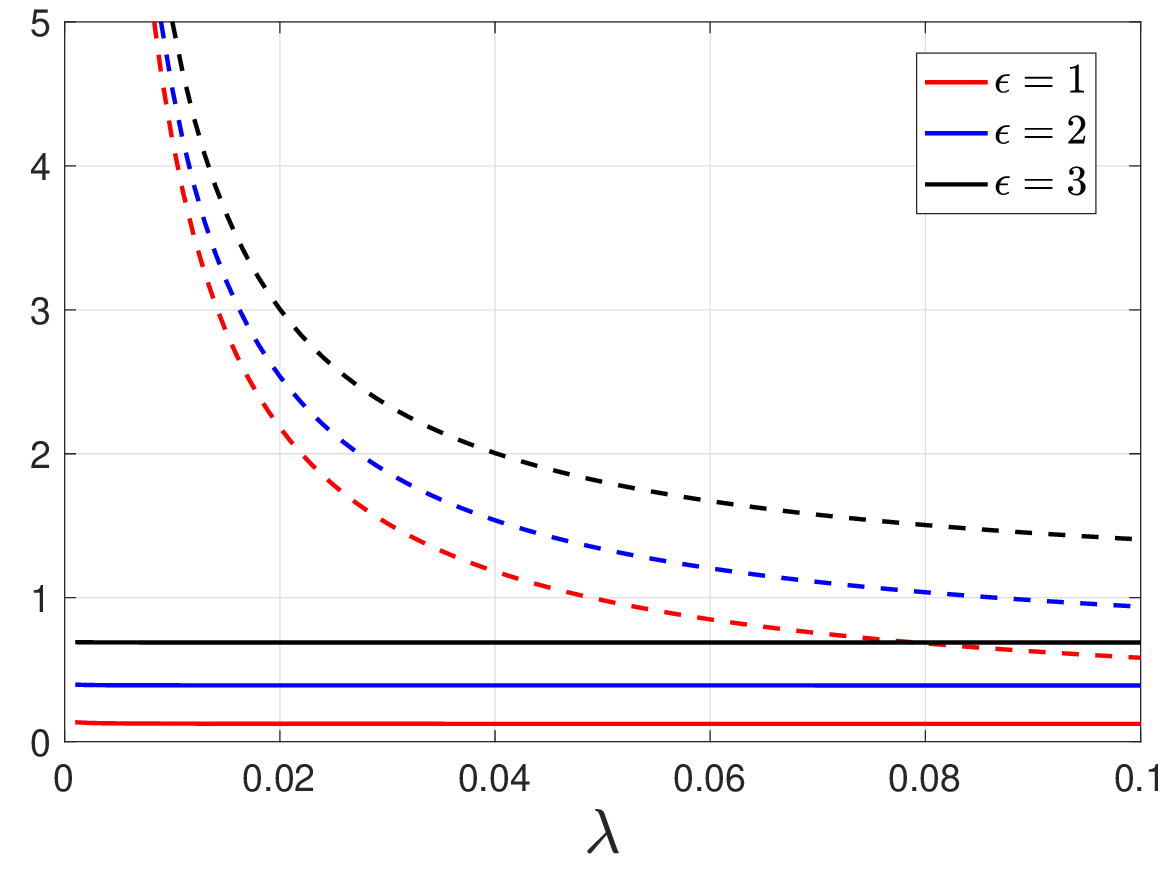}  
  \caption{}
  \label{fig:sub-first}
\end{subfigure}
\begin{subfigure}{.45\textwidth}
  \centering
\includegraphics[width=\linewidth]{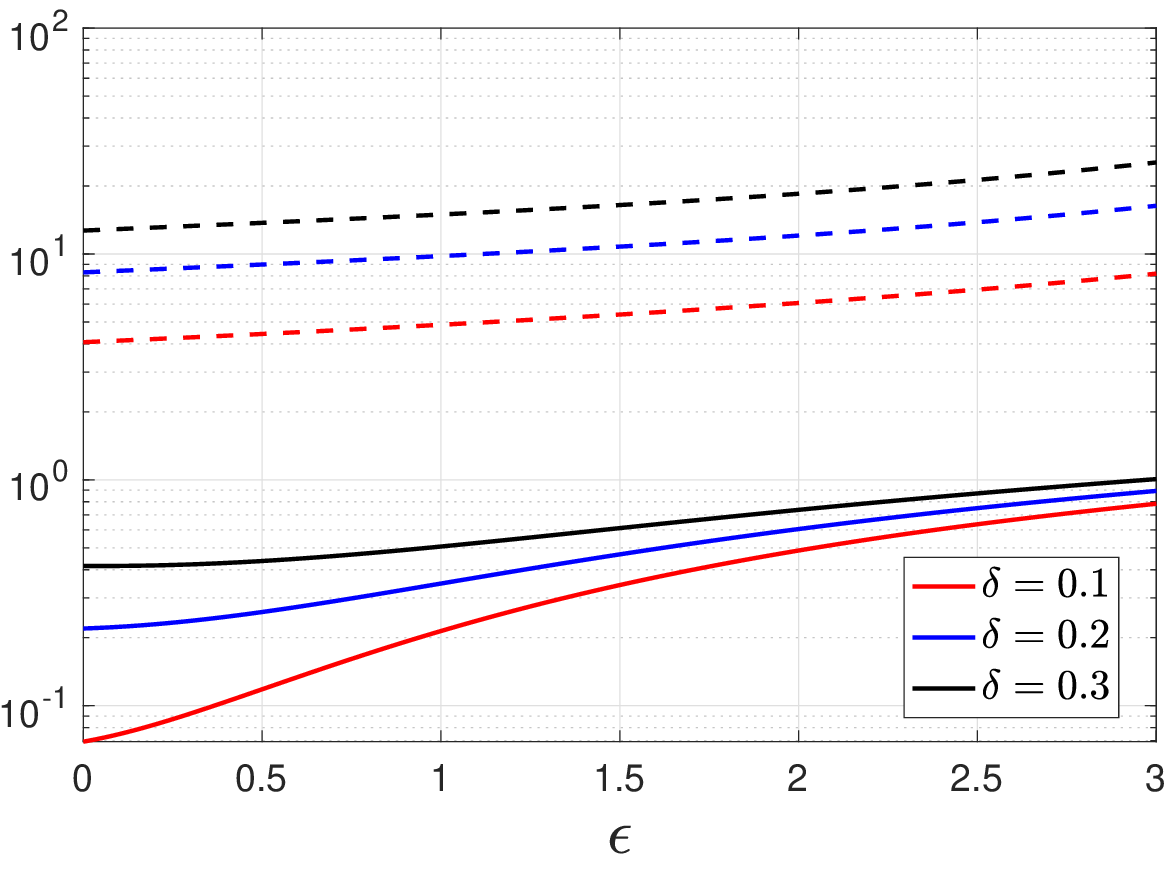}  
  \caption{}
  \label{fig:sub-second}
\end{subfigure}
  \caption{Comparing LDP bounds on the relative entropy. Dashed lines represent Equation~\eqref{Eq:dasgupta} (in particular a lower bound on that since we chose $\lambda=m$) and solid lines our new bound in Equation~\eqref{Eq:RE-LDP-bound}. (a): Plot over $\lambda$, respectively $m$, for fixed $\epsilon=\{1,2,3\}, \delta=0.01, \tau=0.25$. (b): Plot over $\epsilon$ for fixed $\delta=\{0.1,0.2,0.3\},  \lambda=m=0.1, \tau=0.25$.}
    \label{fig:rev-pin-LDP}
\end{figure}

For further comparison, it will be helpful to specialize to KL divergence.
\begin{corollary}\label{Cor:RE-LDP-bound}
       Let $A:\cX \to \cY$ be an $(\epsilon,\delta)$-LDP mechanism (represented as $P_{Y|X}$) such that $\lambda := \inf_{y \in \cY} A(P_X)(y) >0$,
       then
    \begin{align}
      &  D_{\operatorname{KL}}(A(P_X)\|A(Q_X)) \leq \epsilon\frac{e^\epsilon-1+2\delta}{e^\epsilon+1}\tau - \epsilon\frac{e^\epsilon-e^{-\epsilon}}{e^\epsilon-1}\delta \nonumber\\
        &+ \lambda \left( \left(e^\epsilon+\frac{\delta}{\lambda}-1\right)\log\left(e^\epsilon+\frac{\delta}{\lambda}\right) + \left(1-e^\epsilon+\frac{\delta}{\lambda}e^{-\epsilon}\right)\epsilon \right),\label{Eq:RE-LDP-bound}
    \end{align}
    where $\tau=E_1(P_X\|Q_X)$ is the TV distance. 
\end{corollary}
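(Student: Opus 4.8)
The plan is to obtain Corollary \ref{Cor:RE-LDP-bound} as a direct specialization of Proposition \ref{prp:f_div_Contraction} to the convex function $f(x)=x\log x$, since $D_{\operatorname{KL}}=D_f$ for this choice. No new inequality is required: the entire argument is the substitution of this $f$ into the three groups of terms in the proposition, followed by algebraic collection. First I would record the elementary evaluations $f(e^\epsilon)=\epsilon e^\epsilon$, $f(e^{-\epsilon})=-\epsilon e^{-\epsilon}$, and, for later use, the identity $f(u^{-1})=-u^{-1}\log u$ valid for any $u>0$, which is what feeds the composite third term.

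Next I would simplify the three pieces in turn. For the coefficient multiplying $\tau$, the numerator simplifies as $f(e^\epsilon)+e^\epsilon f(e^{-\epsilon})=\epsilon e^\epsilon-\epsilon=\epsilon(e^\epsilon-1)$, which cancels the denominator $e^\epsilon-1$ and leaves the clean factor $\epsilon$, reproducing the first term $\epsilon\frac{e^\epsilon-1+2\delta}{e^\epsilon+1}\tau$. For the $\delta$ term, $f(e^\epsilon)+f(e^{-\epsilon})=\epsilon(e^\epsilon-e^{-\epsilon})$, giving the stated coefficient $-\epsilon\frac{e^\epsilon-e^{-\epsilon}}{e^\epsilon-1}\delta$. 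For the bracketed third term I would set $u\coloneqq e^\epsilon+\frac{\delta}{\lambda}$ and evaluate the four summands: using $u f(u^{-1})=-\log u$ and $u f(e^{-\epsilon})=-\epsilon\bigl(1+\frac{\delta}{\lambda}e^{-\epsilon}\bigr)$, the combination $f(u)-f(e^\epsilon)+u f(u^{-1})-u f(e^{-\epsilon})$ collapses to $(u-1)\log u+\epsilon\bigl(1-e^\epsilon+\frac{\delta}{\lambda}e^{-\epsilon}\bigr)$.

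Finally I would multiply this bracket by $\lambda$ and assemble the three pieces, which yields exactly the right-hand side of \eqref{Eq:RE-LDP-bound}. The only step requiring care is the bookkeeping in the third term, where the argument $u=e^\epsilon+\frac{\delta}{\lambda}$ mixes $\epsilon$, $\delta$, and $\lambda$; the key simplification is rewriting $u f(u^{-1})$ as $-\log u$, after which the two logarithmic contributions merge into $(u-1)\log u$ and the remaining $\epsilon$-linear pieces aggregate into the claimed $\bigl(1-e^\epsilon+\frac{\delta}{\lambda}e^{-\epsilon}\bigr)\epsilon$. Everything else is routine cancellation.
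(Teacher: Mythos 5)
Your proposal is correct and follows exactly the paper's route: the paper's proof of Corollary~\ref{Cor:RE-LDP-bound} is the one-line statement that one substitutes $f(x)=x\log(x)$ into Proposition~\ref{prp:f_div_Contraction} and rearranges, which is precisely what you carry out. Your algebraic evaluations of the three terms (including the cancellation $f(e^\epsilon)+e^\epsilon f(e^{-\epsilon})=\epsilon(e^\epsilon-1)$ and the collapse of the bracketed term via $u f(u^{-1})=-\log u$) all check out and reproduce \eqref{Eq:RE-LDP-bound} exactly.
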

\begin{proof}
    This follows by choosing $f(x)=x\log(x)$ in~\Cref{prp:f_div_Contraction} and rearranging the result.
\end{proof}
There are only a few bounds in the literature that apply to $\delta>0$. A recent point of comparison can be found in~\cite[Theorem 6]{dasgupta2025quantum}, which shows for classical probability distributions $P$ and $Q$ that, 
\begin{align}
    &D_{\operatorname{KL}}(A(P_X)\|A(Q_X)) \leq  \nonumber \\ &\left( \epsilon\tanh\!\left(\frac\epsilon2 \right) + \delta\left(\frac{2\epsilon}{e^\epsilon+1}+\frac{e^\epsilon+\delta-1}{e^\epsilon} + \log\frac1{1-\delta}\right)\right)\tau \ + \nonumber\\
    & \delta\left( \frac{e^\epsilon}{1-\delta} + 2\log\frac{e^\epsilon}{1-\delta} - \frac{1-\delta}{e^\epsilon} + 2\left( \epsilon + \log\frac1{1-\delta} +\frac2m \right) \right), \label{Eq:dasgupta}
\end{align}
where $m\equiv m(\cN,Q,P)$ involves a truncated distribution defined in~\cite{dasgupta2025quantum}. We always have $m\leq\lambda$ as a point of comparison. For better comparison, we can write out the $\tanh$ function and reformulate~\eqref{Eq:dasgupta} as
\begin{align}
   & D_{\operatorname{KL}}(A(P_X)\|A(Q_X)) \leq  \nonumber \\ 
   &\left( \epsilon\frac{e^\epsilon-1+2\delta}{e^\epsilon+1} + \delta\left(\frac{e^\epsilon+\delta-1}{e^\epsilon} + \log\frac1{1-\delta}\right)\right)\tau \nonumber\\
    &+ \delta\left( \frac{e^\epsilon}{1-\delta}  - \frac{1-\delta}{e^\epsilon} + 4\left( \epsilon + \log\frac1{1-\delta} +\frac1m \right) \right).
\end{align}
Comparing to Corollary~\ref{Cor:RE-LDP-bound}, we see that the prefactor of $\tau$ is strictly better in our result. For the overall bounds, we provide a numerical comparison in Figure~\ref{fig:rev-pin-LDP}. For all examples we tried, our new bound performs significantly better (even in the setting $\lambda=m$, which provides a lower bound on the bound given in~\eqref{Eq:dasgupta} since $\lambda \leq m$).

\section{Conclusion}

In this work, we obtain information contraction bounds that are linear and non-linear for hockey-stick divergence when a private mechanism satisfying $(\varepsilon,\delta)$ is applied as the channel. We either improve or generalize the previously known bounds on these quantities, while showcasing instances where our inequalities are in fact tight. For hockey-stick divergence, the non-linear SDPI are in fact tight for some classes of distributions and private mechanisms. It would be an interesting future direction to study whether the upper bounds on $f$-divergence are tight for $\delta >0$. More broadly, it is of interest to study how the parameter $\delta$ impacts the performance of statistical tasks.
We study the quantum extension of these results in~\cite{NGH2025nonHS}.

\bibliographystyle{ieeetr}
\bibliography{lib}

\end{document}